\documentclass[5p]{elsarticle}

\usepackage{amssymb,amsmath,amsthm}
\usepackage[vlined]{algorithm2e}

\usepackage{xcolor}

\usepackage{url}

\newcommand{\UWWCM}{\textsc{Unique-Winner WCM}}
\newcommand{\CWWCM}{\textsc{Nonunique-Winner WCM}}

\newcommand{\wnonmanip}{\ensuremath{w^\circ}}
\newcommand{\wmanip}{\ensuremath{w^\bullet}}
\newcommand{\wboth}{\ensuremath{w^{\circ\bullet}}}

\newcommand{\wsummanip}{\ensuremath{W}}

\newcommand{\pinonmanip}{\ensuremath{\pi^\circ}}
\newcommand{\pimanip}{\ensuremath{\pi^\bullet}}
\newcommand{\piboth}{\ensuremath{\pi^{\circ\bullet}}}

\newcommand{\ononmanip}{\ensuremath{\omega^\circ}}
\newcommand{\omanip}{\ensuremath{\omega^\bullet}}
\newcommand{\oboth}{\ensuremath{\omega^{\circ\bullet}}}

 \newtheorem{theorem}{Theorem}%[section]
 \newtheorem{corollary}[theorem]{Corollary}
 \newtheorem{lemma}[theorem]{Lemma}
 \newtheorem{proposition}[theorem]{Proposition}

\tolerance 1414
\hbadness 1414
\emergencystretch 1.5em
\hfuzz 0.3pt
\widowpenalty=10000
\vfuzz \hfuzz
\raggedbottom

\makeatletter
\def\ps@pprintTitle{%
	\let\@oddhead\@empty
	\let\@evenhead\@empty
	\def\@oddfoot{}%
	\let\@evenfoot\@oddfoot}
\makeatother

\begin{document}
\begin{frontmatter}
\title{A Note on the Complexity of Manipulating\\  Weighted Schulze Voting\tnoteref{tn,license}}

\tnotetext[tn]{Work done in part while the first author was with University of Konstanz; material is based upon the first author's bachelor's thesis~\cite{mueller-2013}.}

\tnotetext[license]{This document is the accepted manuscript of a work that has been formally published in Information Processing Letters. To access its final edited and published form, visit \url{https://doi.org/10.1016/j.ipl.2020.105989}. \\
\copyright\space2020. This manuscript version is made available under a Creative Commons BY-NC-ND License. To view a copy of the license, visit \url{https://creativecommons.org/licenses/by-nc-nd/4.0/}.}

\author[jm]{Julian M\"uller}
\ead{Julian.Mueller@gess.ethz.ch}

\author[sk]{Sven Kosub}
\ead{Sven.Kosub@uni-konstanz.de}

\address[jm]{Department of Humanities, Social and Political Sciences, ETH Zurich, Weinbergstrasse 109, CH-8092 Zurich, Switzerland}

\address[sk]{Department of Computer and Information Science, University of Konstanz, D-78457 Konstanz, Germany}

\begin{abstract}
We prove that the constructive weighted coalitional manipulation problem for the Schulze voting rule can be solved
in polynomial time for an unbounded number of candidates and an unbounded number of manipulators.
\end{abstract}

\begin{keyword}
	computational social choice; Schulze voting; computational complexity
\end{keyword}

\end{frontmatter}

\section{Introduction}

The Schulze voting rule \cite{schulze-2011} is a recent method for preference aggregation among a group of agents or voters. 
Because of its many desirable mathematical properties, it has attracted a lot of attention in collective decision making and 
has found its way into implementations for various applications. 

This interest also sparked research on the robustness of election outcomes under Schulze voting (margin of victory~\cite{reisch-rothe-schend-2014}) and the rule's resistance to outcome engineering (bribery~\cite{hemaspaandra-lavaee-menton-2016,parkes-xia-2012}, control~\cite{hemaspaandra-lavaee-menton-2016,menton-singh-2013,parkes-xia-2012} and voting manipulation~\cite{gaspers-kalinowski-narodytska-walsh-2013,hemaspaandra-lavaee-menton-2016,parkes-xia-2012}). In particular, several studies have suggested that the Schulze voting rule is vulnerable to voting manipulation 
({\em aka} strategic voting). 
In the light of the famous Gibbard-Satterthwaite theorem \cite{gibbard-1973,satterthwaite-1975} (and its extension by the 
Duggan-Schwartz theorem \cite{duggan-schwartz-2000} to irresolute voting rules), which in principle excludes non-manipulatibility 
in most cases, voting manipulation could be made prohibitive by using computational hardness 
(cf., e.g., \cite{bartholdi-tovey-trick-1989,conitzer-sandholm-lang-2007,conitzer-walsh-2016}).
However, the manipulation problem for Schulze voting has been shown to be solvable in polynomial time 
for many computational scenarios.

Parkes and Xia~\cite{parkes-xia-2012} showed that, for Schulze voting, constructive unweighted coalitional manipulation (UCM) for a 
single manipulator and destructive UCM for an arbitrary number of manipulators are solvable in polynomial time. 
Building on this work, Gaspers {\em et al.} \cite{gaspers-kalinowski-narodytska-walsh-2013} found that 
constructive UCM is polynomial-time computable and established that constructive weighted coalitional manipulation 
(WCM) is fixed-parameter tractable (in the number of candidates) in the nonunique-winner model. 
In addition, Hemaspaandra, Lavaee and Menton \cite{hemaspaandra-lavaee-menton-2016} proved that constructive WCM 
in the unique-winner model and destructive WCM in both models are also fixed-parameter tractable.

We improve on these results and give an algorithm that solves constructive WCM in polynomial time even for 
an unbounded number of candidates.
As the Parkes-Xia algorithm \cite{parkes-xia-2012} for destructive UCM in the unique-winner model is easily adaptable to destructive WCM in both the unique 
and the nonunique-winner model,\footnote{Since all manipulators vote in the same way in the Parkes-Xia algorithm, weights actually do not matter for
the construction of manipulators' voting profiles. Furthermore, the number of manipulators and non-manipulators has no influence on the runtime of 
the core algorithm except for computing the weighted majority graph and outputting the voting profiles (for the relevant notions, see Sect.~2). 
Thus, the algorithm runs in polynomial time on weighted~instances.} all open problems identified in 
\cite{hemaspaandra-lavaee-menton-2016} regarding the complexity of WCM for Schulze voting are thus settled: they are all 
polynomial-time computable.

\section{The Schulze voting rule}

{\em Voting.} 
Let $X=\{x_1,\dots, x_m\}$ be a finite, non-empty set of $m$ alternatives or candidates, and let
$V=\{v_1,\dots, v_n\}$ be a finite, non-empty multi-set of $n$ agents or voters.
We assume that each voter $v_i$ has a weight $w_i\in \mathbb{N}\setminus \{0\}$; 
e.g., $v_i$ could stand representatively for a coalition of $w_i$ unanimous voters.
Each voter $v_i\in V$ imposes a total order on the set of candidates which expresses $v_i$'s 
preferences and which is called $v_i$'s {\em vote};  
more specifically, a vote of voter $v_i\in V$ is given by a bijective function 
$\pi_i: X \to \{1,\dots, m\}$, where $\pi_i(x)>\pi_i(y)$ means that $v_i$ strictly prefers $x$ to $y$.\footnote{Note that incomparable candidates are excluded. Note also that we reverse the usual preference order (with best alternative at rank~1), which is more appropriate for score-based representations of rankings in line with algorithmic constructions in this article.}
A weighted {\em voting profile} is a sequence $(\pi,w)=(\pi_1,\dots,\pi_n; w_1,\dots, w_n)$ containing 
a vote (of weight $w_i$) from each voter $v_i$.

A voting rule is a mapping $F : (\pi_1,\dots, \pi_n; w_1,\dots, w_n) \linebreak \mapsto Y \in {\cal P}(X)$, i.e.,
$F$ assigns a set of candidates to each possible weighted voting profile $(\pi_1,\dots, \pi_n;w_1,\dots, w_n)$.
If $F(\pi_1,\dots,\pi_n; w_1,\dots, w_n)$ is a singleton set then the winning candidate is called 
{\em unique} winner of the election under voting rule $F$.

\bigskip

{\em Weighted majority graph.}
The Schulze voting rule is based on the method of pairwise comparisons. 
Given a weighted voting profile $(\pi,w)=(\pi_1,\dots, \pi_n; w_1,\dots,w_n)$ and distinct 
candidates $x,y\in X$, let $\omega_{(\pi,w)}(x,y)$ denote the weighted number of voters that strictly prefer $x$ 
to $y$ minus the weighted number of voters that strictly prefer $y$ to $x$, i.e.,
\[\omega_{(\pi,w)}(x,y) =_{\rm def} \sum_{\pi_i(x)>\pi_i(y)} w_i  ~~- \sum_{\pi_i(y)>\pi_i(x)} w_i.\]
So, if $\omega_{(\pi,w)}(x,y)>0$ then the majority of voters prefers $x$ to $y$.

The outcomes of the pairwise comparisons are collected as edge annotations to the complete directed graph 
$K=(X,E)$ on vertex set $X$, i.e, the edge set of $K$ is $E=X\times X \setminus \{~(x,x)~|~x\in X~\}$.
In particular, for weight function $\omega_{(\pi,w)} : E\to \mathbb{Z} : (x,y) \mapsto \omega_{(\pi,w)}(x,y)$, 
the skew-symmetric graph $(K,\omega_{(\pi,w)})$ is the {\em weighted majority graph} for the weighted voting 
profile $(\pi,w)$.
In the forthcoming, we also consider weight functions for $K$ that do not necessarily correspond to voting profiles.

\bigskip

{\em Schulze method.} The Schulze voting rule evaluates the weighted majority graph in a specific path-based way.
For our purposes, it is useful to introduce the relevant notions for a weighted graph $(K,\omega)$ 
equipped with an arbitrary weight function $\omega:E\to\mathbb{Z}$.
For each path $p=(y_1,\dots, y_\ell)$ in $(K,\omega)$, the {\em strength $s[\omega](p)$~of path $p$} is defined 
to be the minimum weight of two candidates consecutively occurring on $p$, i.e., 
\[s[\omega](p)=_{\rm def} \min\  \{~\omega(y_{i}, y_{i+1})~|~i\in \{1,\dots, \ell-1\}~\}.\]
For distinct candidates $x$ and $y$, the {\em path strength $S[\omega](x,y)$} is defined to be the maximum
strength of an $(x,y)$-path in $(K,\omega)$, i.e.,
\begin{multline*}S[\omega](x,y)=_{\rm def} \\ \max\ \{~s[\omega](p)~|~\textrm{$p$ is an $(x,y)$-path in $(K,\omega)$}~\}.
\end{multline*}
Finally, a candidate $x\in X$ is a {\em Schulze winner} for weighted voting profile $(\pi,w)$
if and only if $S[\omega_{(\pi,w)}](x,y)\ge S[\omega_{(\pi,w)}](y,x)$ for all candidates $y\in X\setminus\{x\}$.

The following proposition summarizes both that the Schulze voting rule always selects at least one winner and that deciding whether there is 
a unique Schulze winner is also easily possible.

\begin{proposition} \label{prop:schulzewinner}
Let $(\pi,w)$ be any weighted voting profile for candidate set $X$ and voter set $V$.
\begin{enumerate}
\item {\em \cite{schulze-2011}} There always exists a Schulze winner for the weighted voting profile $(\pi,w)$.\looseness=1
\item {\em \cite{parkes-xia-2012}} A candidate $x\in X$ is a unique Schulze winner for weighted voting profile $(\pi,w)$
	if and only if $S[\omega_{(\pi,w)}](x,y)> S[\omega_{(\pi,w)}](y,x)$ for all $y\in X\setminus \{x\}$.
\end{enumerate}
\end{proposition}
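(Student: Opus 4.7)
The plan is to reduce both claims to a single fact: the binary relation $\succ$ on $X$ defined by $x \succ y$ if and only if $S(x,y) > S(y,x)$ is transitive (and manifestly irreflexive), hence a strict partial order on the finite set $X$. Throughout I abbreviate $S(x,y) = S[\omega_{(\pi,w)}](x,y)$.

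The key ingredient is the \emph{concatenation lemma}: for any three distinct candidates $x, y, z$,
\[S(x,z) \ge \min\{S(x,y),\, S(y,z)\}.\]
This is immediate from the definitions, since splicing an optimal $(x,y)$-path with an optimal $(y,z)$-path produces an $(x,z)$-walk of strength $\min\{S(x,y), S(y,z)\}$, and every walk contains a simple path of no smaller strength.

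For transitivity, assume $S(x,y) > S(y,x)$ and $S(y,z) > S(z,y)$. Applying the concatenation lemma through $y$ gives the three bounds $S(x,z) \ge \min\{S(x,y), S(y,z)\}$, $S(y,x) \ge \min\{S(y,z), S(z,x)\}$, and $S(z,y) \ge \min\{S(z,x), S(x,y)\}$. I would split on whether $S(x,y) \le S(y,z)$ or $S(x,y) > S(y,z)$. In the former case, the first bound yields $S(x,z) \ge S(x,y) > S(y,x)$; if additionally $S(z,x) > S(y,z)$, the second bound gives $S(y,x) \ge S(y,z) \ge S(x,y) > S(y,x)$, a contradiction, so $S(z,x) \le S(y,z)$ and the second bound refines to $S(z,x) \le S(y,x) < S(x,y) \le S(x,z)$. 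The opposite case is symmetric, using the third bound in place of the second. Either way $x \succ z$, establishing transitivity.

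With transitivity in hand, part~1 follows because every finite strict partial order has a maximal element, and any maximal element $x$ of $\succ$ satisfies $S(x,y) \ge S(y,x)$ for all $y \neq x$, hence is a Schulze winner. For part~2, a finite strict partial order with a unique maximal element in fact has a greatest element; so $x$ is the unique Schulze winner precisely when $x \succ y$ for every $y \neq x$, which is the stated strict-inequality condition. The main obstacle is the case analysis for transitivity; it is elementary but requires tracking which of $S(x,y), S(y,z), S(z,x)$ is smallest, because the concatenation bound is only as strong as its weaker argument.
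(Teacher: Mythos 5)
Your proof is correct, and it is worth noting that the paper itself gives no proof of this proposition at all: both parts are stated with citations to Schulze and to Parkes--Xia, so there is no in-paper argument to compare against. What you have reconstructed is essentially the classical argument from Schulze's original paper: the concatenation bound $S(x,z)\ge\min\{S(x,y),S(y,z)\}$ (correct as stated, since a spliced walk contains a simple path whose minimum edge weight is taken over a subset of the walk's edges), the resulting transitivity of the relation $x\succ y \iff S(x,y)>S(y,x)$, and then the identification of Schulze winners with the maximal elements of this finite strict partial order. Your case analysis for transitivity checks out in both branches, including the two embedded subcases that rule out $S(z,x)$ being too large; and your appeal to "unique maximal element of a finite strict partial order is a greatest element" is exactly what is needed for the forward direction of part~2 (the backward direction, as you implicitly use, needs no transitivity at all, only asymmetry of the winner condition). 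Two points you gloss over but which are harmless: the concatenation lemma needs $x,y,z$ pairwise distinct, which holds in the transitivity proof because $\succ$ is irreflexive and asymmetric; and the finiteness hypothesis is essential for the unique-maximal-implies-greatest step, which you correctly flag. In short, your writeup supplies a self-contained proof where the paper delegates to the literature, and it buys the reader independence from the cited sources at the cost of about a page of elementary case analysis.
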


\section{Unique-Winner Constructive Weighted Coalitional Manipulation}\label{sect:uwinner}

Constructive voting manipulation consists of fixing the voting behavior of a coalition of manipulators in order to make a certain candidate 
the winner of the election. More specifically, we consider the following computational problem for finding a unique winner with 
respect to the Schulze voting rule (the nonunique-winner version will be discussed in Sect.~\ref{sect:cowinner}):

\bigskip

\noindent
\begin{tabular}{p{.15\linewidth}p{.77\linewidth}}
  \textsl{Problem:}  	& $\UWWCM$ \\
  \textsl{Instance:}    	& A candidate set $X$, weighted voting profile $(\pi_1,\dots, \pi_k; w_1,\dots, w_k)$ of non-manipulators,
  						  weights $(w_{k+1}, \dots, w_n)$ of manipulators, and a preferred candidate $c\in X$ \\
  \textsl{Task:}   		& Find votes $\pi_{k+1},\dots, \pi_n$ such that $c$ is a unique Schulze winner for the weighted voting 
  						  profile $(\pi_1,\dots,\pi_n; w_1,\dots, w_n)$, or indicate non-existence				  
\end{tabular}

\bigskip

The following theorem is the key to a polynomial algorithm for $\UWWCM$.

\noindent

\begin{theorem}\label{thm:uwwcm}
Let $(X,(\pi_1,\dots, \pi_k;w_1,\dots,w_k),$\linebreak$ (w_{k+1},\dots, w_n), c)$
be a \UWWCM\ instance.
There exists a (finite) function $U:X\to\mathbb{Z}$ which can be computed in polynomial time such that the following two statements are equivalent:
\begin{enumerate}
\item There exist votes $(\pi_{k+1},\dots, \pi_n)$ such that $c$ is a unique Schulze winner for the weighted voting profile 
	$(\pi_1, \dots, \pi_n; w_1, \dots, w_n)$.
\item $U(x)> \omega_{(\pi_1,\dots, \pi_k;w_1,\dots, w_k)}(x,c)-(w_{k+1}+\dots+w_n)$ for all $x\in X\setminus \{c\}$.
\end{enumerate}
\end{theorem}

	The idea behind Theorem \ref{thm:uwwcm} is to construct upper bounds $U(x)$ on the strengths of backward paths from alternative candidates $x$ to the preferred candidate $c$ in any successful manipulation scenario. Consequently, if the second condition of the theorem does not hold true at one candidate $x$, then $c$ will never beat some other candidate due to a backward path that traverses $x$. On the other hand, we will see that such bounds $U(x)$ can be constructed sufficiently small to guarantee the existence of a successful manipulation if the second condition is satisfied at each alternative candidate.

In Lemma \ref{lemma:algoboundsruntime}, we will see that such a function $U$ can be computed in (non-optimal) time $O(m^5)$
(for an $O(m^3)$~bound, see \cite{mueller-2013}), given the weighted majority graph.
Together with checking the second condition of Theorem \ref{thm:uwwcm} for $m-1$ candidates, 
this gives an algorithm whose running time is dominated by the time needed for computing  
function~$U$. 
Note that it takes $O(nm^2)$ steps to compute the weighted majority graph in the first place.

The proof of Theorem \ref{thm:uwwcm} is split into two parts, each of which is covered by its own subsection. 
Subsection~\ref{subsect:necessity} (necessity) describes an algorithm for computing function~$U$ and proves its correctness 
given that there exist manipulators' votes to make $c$ a unique Schulze winner. 
Subsection~\ref{subsect:sufficiency} (sufficiency) describes an algorithm for using function~$U$ to determine the manipulators' 
votes, given that the second condition of Theorem \ref{thm:uwwcm} is satisfied.
Formally, Theorem~\ref{thm:uwwcm} follows from Lemma~\ref{lemma:algoboundsruntime} and Corollary~\ref{cor:sufficiency}.

\subsection{Computing function $U$}\label{subsect:necessity}

{\em Notation.}
We use the following notation for successful $\UWWCM$ instances: 
$(\pinonmanip, \wnonmanip)$ denotes the weighted voting profile of the non-manipulators, i.e., $\pinonmanip=(\pi_1,\dots,\pi_k)$ and $\wnonmanip=(w_1,\dots,w_k)$;
$(\pimanip, \wmanip)$ denotes a successfully manipulating weighted voting profile of the manipulators, i.e., $\pimanip=(\pi_{k+1},\dots,\pi_n)$ and $\wmanip=(w_{k+1},\dots,w_n)$ such that the first condition of Theorem \ref{thm:uwwcm} is satisfied;  
$(\piboth, \wboth)$ denotes a successfully manipulating weighted voting profile of non-manipulators and manipulators, i.e., $(\piboth, \wboth) = (\pinonmanip\pimanip; \wnonmanip\wmanip)$.
Likewise, we use $\ononmanip,\omanip$, and $\oboth$ to denote the edge weight function of the weighted majority graph with respect to the corresponding scenarios.
Let $W$ denote the total weight of the manipulators, i.e., $W=_{\rm def} w_{k+1}+\dots+w_n$.

\bigskip

\noindent
{\em Algorithm.}
The following algorithm (involving two rules we fix later) contains a schematic description of how to compute function $U$: 

\begin{algorithm}
\caption{Algorithm for computing $U$ \label{algorithm:bounds}}
\KwIn{Graph $(K,\ononmanip)=(X,E,\ononmanip)$, candidate $c\in X$, total weight $\wsummanip$ of the manipulators}
\KwOut{(finite) function $U:X\to \mathbb{Z}$}
\Begin{
   	$U(c) \leftarrow \infty$\footnotemark\;
     \ForEach{$x \in X \setminus \{c\}$}{
          $U(x) \leftarrow$ \\ \hfill $\max\ \{\ \ononmanip(y,z)\ |\ y,z\in X, y\neq z\ \} + \wsummanip$\;
     }
     \While{Rule~1 or Rule~2 is applicable to $U$}{
          apply the rule\;
     }
    \Return $U$\;
}
\end{algorithm}
\footnotetext{The $\infty$ symbol serves as a placeholder for a sufficiently large number that acts neutrally in all relevant minimum operations and exceeds $U(x)$ at all alternative candidates $x \in X \setminus \{c\}$; it thus behaves like positive infinity compared to all other values that appear in presented proofs and algorithms. To be exact, $\infty$ may denote any number that strictly exceeds $\ononmanip(x,y) + \wsummanip$ for all~$x, y \in\nolinebreak X, x\neq y$.}

\noindent
While computing $U$, Algorithm~\ref{algorithm:bounds} maintains the following invariant (in accordance with the intended second condition of Theorem \ref{thm:uwwcm}):
\begin{quote}
For any successfully manipulating weighted profile $(\piboth, \wboth)$, we have $U(x) > S[\oboth](x, c)$ for all candidates 
$x \in X \setminus \{c\}$.
\end{quote}
We observe that the initialization of $U(x)$ satisfies the invariant, as $S[\oboth](c, x)$ cannot exceed $U(x)$ for any $x \in X$.
In order to decrease the initial bounds $U(x)$ to a sufficient degree, we use the following rules maintaining the invariant
in Algorithm 1:\footnote{Gaspers {\em et al.} \cite{gaspers-kalinowski-narodytska-walsh-2013}, in their 
algorithm which has the same structure as ours, used three rules. The reason that Algorithm 1 runs in polynomial time lies in the 
difference between their Rule 2, which reduces the values of the bounds by 2 per application, and our Rule 1, which is more efficient in respect thereof. 
Rule 3 of Gaspers {\em et al.} corresponds to our Rule 2. }
\begin{itemize}
\item {\em Rule 1.}
If there is a candidate $x\not=c$ such that $S[\omega'](c,x)< U(x)$ when using weight function 
$\omega':\nolinebreak E\to\mathbb{Z}: (y,z) \mapsto\min\ \{\ \ononmanip(y,z) + \wsummanip, U(z)\ \}$, set $U(x) \leftarrow S[\omega'](c,x)$.
\item {\em Rule 2.}
If there are candidates $x,y \in X \setminus \{c\}$ such that $U(y) < U(x)$ and $\ononmanip(y,x) - \wsummanip \ge U(y)$, 
set $U(x) \leftarrow U(y)$.
\end{itemize}

\begin{lemma} \label{lemma:rule1correctness} Rule~1 maintains the invariant. \end{lemma}
\begin{proof}
Suppose that the invariant is violated for the first time after the rule has been applied, i.e.,\linebreak $S[\omega'](c, x) = U(x)\le S[\oboth](x,c)$. 
Since $c$ is the unique winner for profile $(\piboth, \wboth)$, there exists a $(c,x)$-path $p$ such that $s[\oboth](p) > S[\oboth](x,c)$. 
Thus, $p$ must satisfy $s[\ononmanip](p)+\wsummanip \ge s[\oboth](p) > S[\omega'](c,x) \ge s[\omega'](p)$, so there must be a candidate 
$y$ on path $p$ with $U(y) \le S[\omega'](c,x)$. 
Let $p'$ denote the $(y,x)$-subpath of~$p$. Then, $S[\oboth](y,c)\ge \min \{ s[\oboth](p'), S[\oboth](x,c) \} \ge S[\omega'](c,x) \ge U(y)$. But this would mean that the invariant had already been violated before the rule had been applied.
\end{proof}

\begin{lemma} \label{lemma:rule2correctness} Rule~2 maintains the invariant. \end{lemma}

\begin{proof}
Suppose the invariant is violated for the first time after the rule has been applied to candidates $x$ and $y$, i.e., $S[\oboth](x,c)\ge U(y)$ ($=U(x)$ after the update). 
Then, $S[\oboth](y,c)\ge \min \{ \wnonmanip(y,x) - \wsummanip, S[\oboth](x,c) \} \ge U(y)$. This is a contradiction, as it implies that the invariant had already been violated before the rule was even applied.
\end{proof}

\begin{lemma}\label{lemma:algoboundsruntime}
Algorithm~\ref{algorithm:bounds} computes a function $U: X \rightarrow \mathbb{Z}$ in $O(m^5)$ time for which 
it holds that if the manipulators can vote such that $c$ is the unique Schulze winner, then 
$U(x) > \ononmanip(x, c) - \wsummanip$ for all $x \in X \setminus \{c\}$.
\end{lemma}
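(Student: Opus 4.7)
The plan is to establish two things separately: correctness of the output via the invariant $U(x) > S[\oboth](x,c)$, and the $O(m^5)$ runtime via a structural bound on the values that $U(x)$ can ever assume.

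For correctness, I would first verify that the initialization $U(x) = \max\{\ononmanip(y,z) : y \ne z\} + \wsummanip$ already satisfies the invariant whenever a successful manipulation $(\piboth,\wboth)$ exists. Since $c$ is then the unique Schulze winner, we have $S[\oboth](x,c) < S[\oboth](c,x)$, and any path strength is bounded above by the maximum edge weight on its edges, hence by $\max_{y,z}\oboth(y,z) \le \max_{y,z}\ononmanip(y,z)+\wsummanip = U(x)$; this yields $S[\oboth](x,c) < U(x)$. The invariant is then preserved by Rules~1 and~2 thanks to Lemmas~\ref{lemma:rule1correctness} and~\ref{lemma:rule2correctness}, so it holds at termination. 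To extract the stated conclusion, note that $\oboth(x,c) \ge \ononmanip(x,c)-\wsummanip$ since each manipulator contributes $\pm w_i$ to the $(x,c)$-edge and the negative contributions sum to at least $-\wsummanip$; combined with $S[\oboth](x,c) \ge \oboth(x,c)$ (witnessed by the length-one path from $x$ to $c$), the invariant gives $U(x) > \ononmanip(x,c) - \wsummanip$.

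For the runtime, the central claim is that every value ever assigned to $U(x)$ with $x \ne c$ lies in the fixed set $V_0 = \{\ononmanip(y,z)+\wsummanip : y,z \in X, y \ne z\}$ of cardinality at most $m(m-1)$. I would prove this by induction on rule applications. Rule~2 simply copies an existing $U(y)$ value. Rule~1 sets $U(x) = S[\omega'](c,x)$, which equals $\omega'(u,v)$ for the bottleneck edge $(u,v)$ of an optimal $(c,x)$-path, and thus equals either $\ononmanip(u,v)+\wsummanip$ (directly in $V_0$) or $U(v)$ with $v \ne c$ (in $V_0$ by induction; the case $v=c$ cannot be the bottleneck since $U(c) = \infty$ forces $\omega'(u,c) = \ononmanip(u,c)+\wsummanip$). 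The initial value itself lies in $V_0$ because the maximum is attained by some pair. Consequently each $U(x)$ can undergo at most $|V_0| = O(m^2)$ strict decreases, and since every iteration of the while loop strictly decreases at least one $U(x)$, the total number of iterations is $(m-1) \cdot O(m^2) = O(m^3)$.

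Each iteration can be executed in $O(m^2)$ time: one builds $\omega'$ in $O(m^2)$, then runs a single-source widest-path computation from $c$ (a max-min analogue of Dijkstra's algorithm, running in $O(m^2)$ on a dense graph with an array-based priority queue), which simultaneously yields every $S[\omega'](c,x)$ and thereby both tests and performs any applicable Rule~1 step; if Rule~1 is inapplicable, a pairwise scan in $O(m^2)$ tests Rule~2. Multiplying $O(m^3)$ iterations by $O(m^2)$ per iteration yields the claimed $O(m^5)$ bound. The main obstacle is the combinatorial argument that the $U$-values never escape $V_0$; this is precisely what distinguishes Algorithm~\ref{algorithm:bounds} from the Gaspers et al.\ algorithm and prevents a blow-up that would otherwise be exponential in the magnitudes of the edge weights. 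Once the $V_0$ bound is in place, the rest of the analysis is routine bookkeeping built on top of Lemmas~\ref{lemma:rule1correctness} and~\ref{lemma:rule2correctness}.
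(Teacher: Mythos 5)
Your proposal is correct and takes essentially the same approach as the paper: the inequality follows from the invariant $U(x) > S[\oboth](x,c)$ (preserved by Lemmas~\ref{lemma:rule1correctness} and~\ref{lemma:rule2correctness}) together with $S[\oboth](x,c) \ge \oboth(x,c) \ge \ononmanip(x,c) - \wsummanip$, and the runtime from the facts that each rule application strictly decreases some $U(x)$, that these values always lie in the $O(m^2)$-element set $\{\,\ononmanip(y,z)+\wsummanip\,\}$, and that each iteration costs $O(m^2)$. Your explicit induction justifying the value-set claim for Rule~1 via the bottleneck edge is a detail the paper leaves implicit, but it is the same argument.
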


\begin{proof} The runtime bound follows from two observations:
		\begin{itemize}
			\item We can check if and where a rule can be applied and apply it in $O(m^2)$ time.
			\item Whenever a rule is applied at a candidate $x$, $U(x)$ must strictly decrease. As $U(x)$ can only assume values from the set $\{~ \ononmanip(y, z) + \wsummanip~|~y, z \in X~\}$, such a decrease can only happen at most $O(m^2)$ times at one specific candidate, and thus at most $O(m^3)$ times in total.
		\end{itemize}
	The claimed inequality is a consequence of the invariant, since $S[\oboth](x,c) \geq S[\ononmanip](x,c) - \wsummanip \geq \ononmanip(x, c) - \wsummanip$.
\end{proof}

\subsection{Computing the votes of the manipulators} \label{subsect:sufficiency}

{\em Algorithm.} We have seen by now that we can use function $U$ computed by Algorithm~\ref{algorithm:bounds} to formulate a necessary condition for when the manipulators can be successful. We will see now that this necessary condition is also sufficient, as we can also utilize function $U$ from Algorithm~\ref{algorithm:bounds} to construct votes for the manipulators that make $c$ the unique Schulze winner. Algorithm~\ref{algorithm:constructvote} shows how to construct these \nolinebreak votes in $O(m^2)$ time.

\begin{algorithm}
	\caption{Computation of the votes of the manipulators \label{algorithm:constructvote}}
	\KwIn{Graph $(K,\ononmanip)=(X,E,\ononmanip)$,
		candidate $c \in X$, total weight $\wsummanip$ of the manipulators, function $U$ computed by Algorithm~\ref{algorithm:bounds}}
	\KwOut{A vote $\zeta:X\rightarrow \{1,\dots,m\}$ that can be voted by all the manipulators to achieve a successful manipulation, if the election can be manipulated}
	\Begin{
		{Construct directed graph $G=(X, E')$ such that, for $x\not= y$, $(x,y) \in E' ~\iff~$ \\ \quad\quad\quad\quad\quad $\min\ \{\ U(x), \ononmanip(x,y) + \wsummanip\ \} \geq U(y)$}\;
		Compute a spanning arborescence $T \subseteq G$ with root $c$\;
		Compute vote $\zeta : X \rightarrow \{1, \dots, m\}$ by topologically sorting $T$ such that $(x, y) \in E(T) \vee U(x) > U(y) 
		~\implies~ \zeta(x) > \zeta(y)$ for all $x,y \in X$\;
		\Return $\zeta$\;
	}
\end{algorithm}

\begin{lemma} \label{lemma:pathinG}
	For all candidates $x\in X$, there exists a $(c,x)$-path in the graph $G$ as defined in Algorithm~\ref{algorithm:constructvote}.
\end{lemma}

\begin{proof}

	Suppose the claim does not hold, and choose $x$ without a $(c, x)$-path in $G$ such that $U(x)$ is maximum among all candidates without such a path. 
	Since Rule~1 is inapplicable to $U$ at $x$, there exists a $(c,x)$-path $p = (y_1,\dots, y_\ell)$ in $K$ 
	with $\min\{ \ononmanip(y_{i-1}, y_i) + \wsummanip, U(y_i) \} \geq U(x)$ for all $i \in \{2,\dots, \ell \}$ (as follows from the definition of $\omega'$ in Rule~1). 
	Since $\infty = U(c) > U(x)$, there is a candidate $y_i$ with maximal index $i$ such that $U(y_i) > U(x)$. 
	Let $p'$ be the $(y_i,x)$-subpath of $p$. 
	This $(y_i,x)$-path $p'$ also exists in~$G$, since $U(y_{i+1})=\dots=U(y_\ell)=U(x)$, and there is also a $(c,y_i)$-path in $G$ because $U(x)$ is maximum among the candidates not reachable from $c$ in $G$.
	Hence, there exists a $(c,x)$-path in $G$. This is a contradiction.
\end{proof}

Thus, there exists a spanning arborescence $T$ of $G$ rooted at $c$, i.e., a directed spanning tree of $G$ with root $c$ such that there is a unique directed path from the root $c$ to any other candidate in the tree. 
From this arborescence, we obtain a vote $\zeta$ such that  $(x,y) \in E(T)$ or $U(x) > U(y)$ implies $\zeta(x) > \zeta(y)$. Since $G$ and thus $T$ do not contain any edge $(x, y) \in E'$ with $U(x) < U(y)$ as a direct consequence of $G$'s definition, such a vote exists and can be constructed through topological sorting. 
We choose all manipulators to vote $\zeta$, so we set 
$\pimanip =_{\rm def} (\zeta, \dots, \zeta)$ and $(\piboth, \wboth) =_{\rm def} (\pi_1, \dots, \pi_k, \zeta, \dots, \zeta; w_1, \dots, w_n)$.\footnote{This construction implies that whenever the manipulators can make their preferred candidate the unique winner, then all of them can vote the same way to achieve this result.}

\begin{lemma} \label{lemma:pathtolowerbound}
	$S[\oboth](c,x)\geq U(x)$ for all candidates $x\in X \setminus \{c\}$.
\end{lemma}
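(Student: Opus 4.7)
The plan is to exhibit an explicit $(c,x)$-path in the full weighted majority graph $(K,\oboth)$ whose strength is at least $U(x)$, which immediately gives $S[\oboth](c,x)\ge U(x)$. The natural candidate is the unique $(c,x)$-path in the spanning arborescence $T$ constructed by Algorithm~\ref{algorithm:constructvote}. Let this path be $c=y_0,y_1,\dots,y_\ell=x$; I intend to bound $\oboth(y_{i-1},y_i)$ from below by $U(x)$ for every $i\in\{1,\dots,\ell\}$, and then conclude by the definition of path strength.

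The argument relies on two simple observations about the construction of $T$ and $\zeta$. First, since $T\subseteq G$, every edge $(y_{i-1},y_i)$ satisfies the defining condition of $E'$, namely
\[\min\{\,U(y_{i-1}),\ \ononmanip(y_{i-1},y_i)+\wsummanip\,\}\ \ge\ U(y_i).\]
In particular, $\ononmanip(y_{i-1},y_i)+\wsummanip\ge U(y_i)$ and, moreover, $U(y_{i-1})\ge U(y_i)$; the latter says that $U$ is monotonically non-increasing along any walk in $G$. Second, because $\zeta$ is a topological sort of $T$ respecting the condition $(x,y)\in E(T)\Rightarrow \zeta(x)>\zeta(y)$, each manipulator ranks $y_{i-1}$ strictly above $y_i$, so every tree edge gets its full positive contribution from the manipulators:
\[\oboth(y_{i-1},y_i)=\ononmanip(y_{i-1},y_i)+\wsummanip.\]

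Combining the two observations, $\oboth(y_{i-1},y_i)\ge U(y_i)$ for every $i$. The monotonicity of $U$ along the path then gives $U(y_i)\ge U(y_\ell)=U(x)$, so each edge weight is at least $U(x)$. Hence the strength of the tree path satisfies
\[s[\oboth](y_0,\dots,y_\ell)=\min_{1\le i\le\ell}\oboth(y_{i-1},y_i)\ \ge\ U(x),\]
and therefore $S[\oboth](c,x)\ge U(x)$, as required.

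The only nontrivial point is recognizing that $U$ must decrease monotonically along edges of $G$; once that is noted, everything else is a direct consequence of how $T$ and $\zeta$ are constructed. I do not foresee a real obstacle, but it is worth emphasizing that using the path in $T$ rather than an arbitrary $(c,x)$-path in $G$ is essential so that the topological sort forces every manipulator to orient the path from $c$ toward $x$.
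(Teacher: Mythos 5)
Your proof is correct and follows essentially the same route as the paper's: take the $(c,x)$-path in the arborescence $T$, use the edge condition of $G$ to get both $\ononmanip(y_{i-1},y_i)+\wsummanip\ge U(y_i)$ and the monotonicity $U(y_{i-1})\ge U(y_i)$, and use consistency of $\zeta$ with $T$ to conclude $\oboth(y_{i-1},y_i)=\ononmanip(y_{i-1},y_i)+\wsummanip\ge U(x)$. You merely make explicit the ``simple inductive argument'' that the paper leaves implicit.
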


\begin{proof}
	Let $p$ denote a $(c,x)$-path $p = (y_1, \dots, y_\ell)$ in~$T$. 
	All edges in $T$ are also edges in $G$, so we have that $\min \{ U(y_{i-1}), \ononmanip(y_{i-1}, y_i) + \wsummanip \} \geq U(y_i)$ for all~$i \in \nolinebreak \{2, \dots, \ell\}$.  By a simple inductive argument and since the vote $\zeta$ is consistent with all edges in~$T$, we obtain $\oboth(y_{i-1}, y_i) = \ononmanip(y_{i-1}, y_i) + W \ge U(x)$ for all~$i \in\nolinebreak \{2, \dots, \ell\}$. We conclude that $S[\oboth](c, x) \ge s[\oboth](p) \ge U(x)$, which proves the claim.
\end{proof}

\begin{lemma} \label{lemma:pathfromupperbound}
	If $U(x) > \ononmanip(x,c) - W$ for all $x\in X \setminus \{c\}$, then $U(x) > S[\oboth](x,c)$ for all $x\in X \setminus \{c\}$.
\end{lemma}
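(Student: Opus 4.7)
The plan is to fix an arbitrary $x \in X \setminus \{c\}$ and an arbitrary $(x,c)$-path $p = (y_1, \dots, y_\ell)$ in $K$ (without loss of generality simple, so $y_i = c$ only when $i = \ell$), and exhibit a single edge on $p$ whose weight under $\oboth$ is strictly below $U(x)$. This yields $s[\oboth](p) < U(x)$ for every such $p$, and maximising over $p$ gives $S[\oboth](x,c) < U(x)$, which is the claim.

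To locate the desired edge, I would exploit that $U(y_1) = U(x)$ is finite while $U(y_\ell) = U(c) = \infty$, so there is a least index $i \geq 2$ with $U(y_i) > U(x)$, necessarily satisfying $U(y_{i-1}) \leq U(x) < U(y_i)$. Because the vote $\zeta$ produced by Algorithm~\ref{algorithm:constructvote} is a topological sort that respects the $U$-order, the inequality $U(y_i) > U(y_{i-1})$ forces $\zeta(y_i) > \zeta(y_{i-1})$. Since every manipulator votes $\zeta$, their whole weight is subtracted on the edge $(y_{i-1}, y_i)$, so $\oboth(y_{i-1}, y_i) = \ononmanip(y_{i-1}, y_i) - W$. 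It then remains to bound $\ononmanip(y_{i-1}, y_i) - W$ strictly below $U(x)$.

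For this bound I would split on whether $y_i = c$. If $y_i \neq c$, then both $y_{i-1}, y_i$ lie in $X \setminus \{c\}$ and $U(y_{i-1}) < U(y_i)$; the inapplicability of Rule~2 at termination of Algorithm~\ref{algorithm:bounds} then forces $\ononmanip(y_{i-1}, y_i) - W < U(y_{i-1}) \leq U(x)$. If instead $y_i = c$, the hypothesis of the lemma applied to $y_{i-1}$ gives $\ononmanip(y_{i-1}, c) - W < U(y_{i-1}) \leq U(x)$, closing the argument.

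The main obstacle is the edge entering $c$ itself: Rule~2 only constrains pairs of non-target candidates and so cannot be invoked to bound $\ononmanip(y_{i-1}, c) - W$ directly. This is precisely where the lemma's hypothesis $U(x) > \ononmanip(x,c) - W$ is needed, since it supplies the missing bound exactly in the case that the first strict increase of $U$ along the path occurs on the final edge of $p$.
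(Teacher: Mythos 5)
Your proposal is correct and follows essentially the same route as the paper's proof: both locate the first edge of the path along which $U$ strictly increases above $U(x)$, use the $U$-consistency of $\zeta$ to conclude that the full manipulator weight $W$ is subtracted on that edge, and then bound it via the inapplicability of Rule~2 (when the head of the edge is not $c$) or via the lemma's hypothesis (when it is $c$). The only difference is cosmetic — you organize the case split around the first index $i$ with $U(y_i)>U(x)$, while the paper splits on whether any non-$c$ candidate on the path has $U$-value exceeding $U(x)$ — and your observation that it suffices to consider simple paths is a valid (and harmless) normalization.
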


\begin{proof}
	Let $p$ denote any $(x,c)$-path. We consider two cases:
	\begin{itemize}
		\item Assume that there is a candidate $y\neq c$ on $p$ with $U(y) > U(x)$. 
		Then, choose the $y$ closest to $x$ on $p$ among these candidates, and let~$z$ be the candidate 
		immediately preceding $y$ on~$p$. Because $U(y) > \nolinebreak U(x) \ge\nolinebreak U(z)$, the manipulators vote $\zeta(y) > \nolinebreak \zeta(z)$ and thus the vote count is $\oboth(z,y) =\nolinebreak \ononmanip(z,y) -\nolinebreak \wsummanip$.
		As Rule~2 is not applicable to $U$, we have $\ononmanip(z,y) - \wsummanip < U(z)$, and hence obtain $s[\oboth](p) \leq \oboth(z,y) = \ononmanip(z,y) - \wsummanip < U(z) \leq U(x)$.
		\item Otherwise, it holds for all candidates $y\neq c$ on $p$ that $U(y) \leq U(x)$. 
		Let $z$ be the candidate preceding $c$ on~$p$. 
		Since $\infty = U(c)> U(z)$, the manipulators vote $\zeta(c) > \zeta(z)$ and votes count to $\oboth(z,c) = \ononmanip(z,c) -\nolinebreak \wsummanip$. We conclude that $s[\oboth](p)\le \oboth(z,c) = \ononmanip(z,c) - \wsummanip < U(z) \leq U(x)$ due to the premise of Lemma~\ref{lemma:pathfromupperbound}.
	\end{itemize}
	The claim follows.
\end{proof}

\begin{corollary}\label{cor:sufficiency}
	If $U(x) > \ononmanip(x,c) - W$ for all $x\in X \setminus \{c\}$, then  $c$ is a unique Schulze winner for weighted voting profile $(\pi_1, \dots, \pi_k, \zeta, \dots, \zeta; w_1, \dots, w_n)$.
\end{corollary}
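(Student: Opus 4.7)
The plan is to recognize that the corollary is essentially a bookkeeping combination of the two preceding lemmas together with the path-strength characterization of a unique Schulze winner from Proposition~2.1(2). Concretely, a candidate $c$ is a unique Schulze winner for $(\piboth,\wboth)$ if and only if $S[\oboth](c,x) > S[\oboth](x,c)$ holds for every $x\in X\setminus\{c\}$, so I only need to sandwich these two path-strengths around the value $U(x)$.

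First, I would invoke Lemma~\ref{lemma:pathtolowerbound} on the profile $(\piboth,\wboth)$ built from the vote $\zeta$ returned by Algorithm~\ref{algorithm:constructvote}. This lemma gives $S[\oboth](c,x)\ge U(x)$ for every $x\in X\setminus\{c\}$, supplying the lower bound on the $(c,x)$ side. Next, I would apply Lemma~\ref{lemma:pathfromupperbound}, whose hypothesis $U(x) > \ononmanip(x,c) - W$ for every $x\in X\setminus\{c\}$ is exactly the assumption of the corollary. That yields the strict upper bound $U(x) > S[\oboth](x,c)$ on the reverse $(x,c)$ side.

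Finally I would chain the two inequalities: for each $x\in X\setminus\{c\}$,
\[
S[\oboth](c,x)\ \ge\ U(x)\ >\ S[\oboth](x,c),
\]
which by Proposition~2.1(2) certifies $c$ as the unique Schulze winner under the profile $(\pi_1,\dots,\pi_k,\zeta,\dots,\zeta;\,w_1,\dots,w_n)$. The only real care needed is to confirm that both lemmas are being applied to the \emph{same} manipulated profile produced by Algorithm~\ref{algorithm:constructvote}; in particular, Lemma~\ref{lemma:pathtolowerbound} already relied on $\zeta$ being a topological sort consistent with the arborescence $T$ of $G$ (whose existence is guaranteed by Lemma~\ref{lemma:pathinG}), so no additional work is required to secure it here.

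I do not anticipate a genuine obstacle: all the difficulty has been front-loaded into Algorithm~\ref{algorithm:bounds} (correctness of Rules~1 and 2) and into Lemmas~\ref{lemma:pathinG}--\ref{lemma:pathfromupperbound}. The corollary is thus a short assembly step, and the resulting proof will be only a few lines long.
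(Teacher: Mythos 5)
Your proposal is correct and matches the paper's own proof exactly: it chains Lemma~\ref{lemma:pathtolowerbound} and Lemma~\ref{lemma:pathfromupperbound} to obtain $S[\oboth](c,x) \ge U(x) > S[\oboth](x,c)$ for all $x \in X \setminus \{c\}$ and concludes via the unique-winner characterization. No differences worth noting.
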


\begin{proof}
	We conclude from the two preceding lemmas that $S[\oboth](c, x) \geq U(x) > S[\oboth](x, c)$ for all $x \in X \setminus \{c\}$. Hence, $c$ is a unique Schulze winner by Proposition~\ref{prop:schulzewinner}.
\end{proof}

\section{Nonunique-Winner Constructive Weighted Coalitional Manipulation}\label{sect:cowinner}

We briefly discuss the nonunique-winner variant of the constructive weighted coalitional manipulation problem, i.e., checking whether
candidate $c$ can be made a Schulze winner (but not necessarily the only one) with respect to the same voting profile.

\bigskip

\noindent
\begin{tabular}{p{.14\linewidth}p{.78\linewidth}}
	\textsl{Problem:}  	& $\CWWCM$ \\
	\textsl{Instance:}    	& A candidate set $X$, weighted voting profile $(\pi_1,\dots, \pi_k; w_1,\dots, w_k)$ of non-manipulators,
	weights $(w_{k+1}, \dots, w_n)$ of manipulators, and a preferred candidate $c\in X$ \\
	\textsl{Task:}   		& Find votes $\pi_{k+1},\dots, \pi_n$ such that $c$ is a Schulze winner for the weighted voting 
	profile $(\pi_1,\dots,\pi_n; w_1,\dots, w_n)$, or indicate non-existence
\end{tabular}

\bigskip

Following the arguments in Sect.~\ref{sect:uwinner} literally, the algorithms for $\UWWCM$ can be easily adapted to solve $\CWWCM$.
A modification is only required for Algorithm 1 computing $U$. 
In contrast to the unique winner problem, we allow the bounds in the invariant to be non-strict, i.e., we allow
the relaxed inequality $U(x) \ge S[\oboth](x, c)$.
This leads to the replacement of Rule~2 by a slightly modified rule:
\begin{itemize}
\item {\em Rule 2'.}
If there are candidates $x,y\in X \setminus \{c\}$ such that $U(x) > U(y)$ and $\ononmanip(y,x) - \wsummanip > U(y)$, set 
$U(x) \leftarrow U(y)$.
\end{itemize}
With everything else unchanged, we can easily re-prove a slightly different second condition for a successful manipulation (in the spirit of Theorem \ref{thm:uwwcm}):

\begin{theorem}\label{thm:cwwcm}
Let $(X,(\pi_1,\dots, \pi_k;w_1,\dots,w_k),$\linebreak$ (w_{k+1},\dots, w_n), c)$
be a \CWWCM\ instance. There exists a (finite) function $U:X\to\mathbb{Z}$ which can be computed in polynomial time such that the following two statements are equivalent:
\begin{enumerate}
\item There exist votes $(\pi_{k+1},\dots, \pi_n)$ such that $c$ is a Schulze winner for the weighted voting profile 
	$(\pi_1, \dots, \pi_n; w_1, \dots, w_n)$.
\item $U(x)\ge \omega_{(\pi_1,\dots, \pi_k;w_1,\dots, w_k)}(x,c)-(w_{k+1}+\dots+w_n)$ for all $x\in X\setminus \{c\}$.
\end{enumerate}
\end{theorem}

$\CWWCM$ is therefore solvable in polynomial time, too.

\section{Conclusion}

We studied weighted coalitional manipulation under the Schulze voting rule, and showed that constructive coalitional manipulation is polynomial-time solvable and a successfully manipulating voting profile for the manipulators is polynomial-time constructible in the unique and nonunique-winner models. Together with an adaption of the algorithm by Parkes and Xia for destructive manipulation~\cite{parkes-xia-2012} to weighted voting, this resolves the open questions in~\cite{hemaspaandra-lavaee-menton-2016} on the complexity of weighted coalitional manipulation for Schulze voting.

\section*{Acknowledgement}
We thank the anonymous reviewers for their valuable comments that helped improve the presentation of this article.

%%%%%%  bibliography

%\section*{References}

\end{document}